\journal{}
\newtheorem{theorem}{Theorem}[section]
\newtheorem{lemma}{Lemma}[section]
\newenvironment{proof}{\par\noindent{\it Proof:\/}}{}
\newcommand{\Access}{\mbox{$\mathit{access}$}}
\newcommand{\Rank}{\mbox{$\mathit{rank}$}}
\newcommand{\Select}{\mbox{$\mathit{select}$}}
\newcommand{\Index}{\mbox{$\mathit{index}$}}
\newcommand{\Active}{\mbox{$\mathit{active?}$}}
\newcommand{\floors}[1]{\left\lfloor #1 \right\rfloor}
\newcommand{\ceils}[1]{\left\lceil #1 \right\rceil}
\newcommand{\twodots}{\,.\,.\,}
\newcommand{\sequence}[1]{\left\langle#1\right\rangle}
\newcommand{\ylatyhjaa}{\mbox{${}^{\mbox{\rule{0cm}{1.25ex}}}$}}
\begin{document}
\begin{frontmatter}
\title{Selection from read-only memory \\ with limited workspace \tnoteref{t1}}

\tnotetext[t1]{The results of this paper were presented in preliminary
  form at the 19th Annual International Computing and Combinatorics
  Conference held in Hangzhou, China, in June 2013.}

\author[AU]{Amr Elmasry}
\ead{elmasry@alexu.edu.eg}
\author[KU]{Daniel Dahl Juhl}
\ead{juhl.daniel@gmail.com}
\author[KU]{Jyrki Katajainen}
\ead{jyrki@di.ku.dk}
\author[SNU]{Srinivasa Rao Satti\fnref{ssr-thanks}}
\ead{ssrao@cse.snu.ac.kr}

\fntext[ssr-thanks]{The research of this author was partly supported
  by Basic Science Research Program through the National Research
  Foundation of Korea (NRF) funded by the Ministry of Education,
  Science and Technology (grant number 2012-0008241).}

%\institute{%
\address[AU]{Department of Computer Engineering and Systems, Alexandria University,
Egypt}
\address[KU]{Department of Computer Science, University of Copenhagen, Denmark}
\address[SNU]{School of Computer Science and Engineering, Seoul National University, South Korea}

\begin{abstract}
Given an unordered array of $N$ elements drawn from a totally ordered
set and an integer $k$ in the range from $1$ to $N$, in the classic
selection problem the task is to find the $k$-th smallest element in
the array.  We study the complexity of this problem in the
space-restricted random-access model: The input array is
stored on read-only memory, and the algorithm has access to a 
%capacity of additional workspace is
limited amount of workspace. We prove that the linear-time prune-and-search
algorithm---presented in most textbooks on algorithms---can be
modified to use $\Theta(N)$ bits instead of $\Theta(N)$ words of extra
space. Prior to our work, the best known algorithm by Frederickson
could perform the task with $\Theta(N)$ bits of extra space in $O(N
\lg^{*} N)$ time. Our result separates the space-restricted
random-access model and the multi-pass streaming model, since we can
surpass the $\Omega(N \lg^{*} N)$ lower bound known for the latter
model.  We also generalize our algorithm for the case when the size of
the workspace is $\Theta(S)$ bits, where $\lg^3{N} \leq S \leq N$.  The
running time of our generalized algorithm is $O(N \lg^{*}(N/S) + N
(\lg N) / \lg{} S)$, slightly improving over the $O(N \lg^{*}(N
(\lg N)/S) + N (\lg N) / \lg{} S)$ bound of Frederickson's algorithm.  
To obtain the improvements mentioned above, we developed a new data structure,
called the wavelet stack, that we use for repeated pruning. 
We expect the wavelet stack to be a useful tool in other applications as well.
%Of independent interest, the wavelet stack---a structure we use for
%repeated pruning---may also be useful in other applications.
\end{abstract}
\begin{keyword}
Selection algorithm\sep read-only memory\sep random-access machine\sep multi-pass streaming\sep bit vector\sep wavelet stack 
\end{keyword}
\end{frontmatter}

%\linenumbers

\section{Introduction}

Let $\sequence{x_1, x_2, \ldots, x_N}$ be a sequence of unordered elements, from a totally ordered set, stored in an array.
Given an integer $k$ in the range $1 \leq k \leq N$, 
in the selection problem we want to find the $k$-th smallest of these
elements.  Without loss of generality, we can assume that the
elements are distinct (since in the case of equal elements, the
indices can be used to distinguish the elements).  That is, the output
will be a single index $j$ with a guarantee that $k - 1$ elements are
smaller than $x_j$ and $N - k$ elements are larger than $x_j$.

The asymptotic time complexity of the selection problem was settled to be
$\Theta(N)$ by Blum et al.~\cite{BFPRT73} in their celebrated article
from 1973. Here we study the problem in the \emph{space-restricted
  random-access model}, where the input elements are given in a
read-only array and a limited amount of additional workspace is
available.  We first focus on the case where the amount of
workspace is $\Theta(N)$ bits, and subsequently consider the general
case with a more limited workspace. Surprisingly, although the
time-space trade-offs for selection in this read-only space-restricted setting
have been studied in several papers
\cite{Cha10,Fre87,MP80,MR96,RR99}, its exact complexity is still not fully
resolved (even after our study).

We start by describing an algorithm that
solves the selection problem using $O(\lg^2 N)$ extra bits (or a logarithmic
number of machine words).  Although this algorithm does not match the 
time complexity of the best known algorithm using this amount of workspace, it
illustrates some of the difficulties involved in designing algorithms for this
model of computation, and also describes a few techniques that we shall use
in our main algorithm. 
As in many other selection algorithms, we maintain two
indices that specify the so-called \emph{filters}. The elements whose values 
fall within the range of the two filters are still possible candidates for being the $k$-th smallest
element. We say that the elements within the range of the filters are \emph{active}. 
At the beginning of the algorithm, we scan the input and initialize the two filters to 
be the minimum and maximum elements. If $k=1$ or $N$, we are done; otherwise, all elements are active
except the filters.

The algorithm proceeds recursively. Each recursive call takes the following 
parameters: (i) a contiguous segment of $M$ elements, specified by its
first and last array indices, (ii) a pair of filters, specified by their positions in the 
input array, and (iii) a parameter $k$; and returns the $k$-th smallest 
element among the active elements within the segment. 
The recursive call proceeds by dividing the segment into $t$ {\em zones} of contiguous elements:
each zone except the last constitutes $\ceils{M/t}$ elements, and the remaining elements 
form the last zone. We then check which of the $t$ zones contains the
largest number of active elements (this idea is from \cite{MR96}); we say that this
zone is \emph{heavy}. Next, we select the median of the heavy zone
recursively. Let $x_{i1}$ and $x_{i2}$ be the filters (before the
recursive calls) and $x_m$ the
median found. After this first recursive call, we scan through the elements
in the current segment to determine whether the $k$-th smallest element
is in the interval $(x_{i1}\twodots x_{m})$, is equal to $x_m$, or is in
the interval $(x_m\twodots x_{i2})$.  If $x_m$ is the $k$-th smallest element, we
return $x_m$ as output.  In the other two cases, we update the 
filters. In the last case, we set $k$ to $k -\alpha$ if $\alpha$ smaller active elements were eliminated. 
Since $x_m$ is the median of the active elements in the heavy zone,
at least $\frac{1}{2t}$-th of the active elements will be removed from
further consideration.  Finally, we perform the second recursive call to find the $k$-th smallest
of the remaining active elements in the whole segment.

For each subproblem we store: the segment boundaries, the present
filters, and a bit indicating which of the recursive calls---the first
or the second---is invoked. When $t$ is a constant, the maximum depth
of the recursion stack is $O(\lg N)$.  Thus, the overall workspace
used is $O(\lg^2 N)$ bits.

We use $n$ to denote the number of active elements, and $M$ to denote the size of the contiguous segment 
where the elements reside. The worst-case running time can be
described using the recurrence:
$$T(n, M) \leq \left\{
\begin{array}{l}
c_1 \cdot M \mbox{~~if $n < n_0$} \\
T\left(n', \ceils{M/t} \right)
+ 
T\left(n'', M\right)
+
c_2 \cdot M  \mbox{~~if $n \ge n_0\,,$}
\end{array}
\right.$$
\noindent
where $c_1, c_2$ and $n_0$ are positive constants, and $n'$ and $n''$
denote the number of active elements of the subproblems in the first and second recursive
calls, respectively. Since $n/t \leq n' \leq n$, let $n'=r \cdot n$ where $1/t \leq r \leq 1$. 
It follows that $n'' \leq (1-r/2) \cdot n$.  
When we set $t=16$, for the case $n \geq n_0$ where $n_0$ is big enough, one can show by substitution that $T(n,M) \leq c \cdot (\sqrt{n}-1) \cdot M$ for a positive constant $c$ big enough compared to $c_1$ and $c_2$. Indeed, from the induction hypothesis, $T\left(n', \ceils{M/16} \right) \leq c \cdot (\sqrt{r\cdot n} - 1) \cdot \ceils{M/16}$
and $T\left(n'', M\right) \leq c \cdot (\sqrt{(1-r/2)\cdot n} - 1) \cdot M$. 
Since $\sqrt{1-r/2} < 1-r/4$, thus $\sqrt{r}/16 + \sqrt{1-r/2} < 1$
when $1/16 \leq r \leq 1$, and the induction hypothesis is validated.
For these settings, the running time of the algorithm is therefore $O(N^{3/2})$.

The above algorithm highlights several aspects that are important
for algorithms designed for the space-bounded random-access
machine. Since we cannot move or modify elements, 
neither can we utilize enough space to store the indices of the active elements,
we have to scan the read-only array several times and pass over already eliminated elements. 
Due to the limited memory resources, 
we cannot store everything, and sometimes have to recompute
information that has already been computed previously. Also because of the limited 
workspace, it might be necessary to resort to some bit tricks to save space and time.

The performance of the best known selection algorithms is summarized in Table
\ref{table:known-algorithms} for workspace sizes specified as a function of the number of elements. 
In this paper we improve the known
results when the amount of extra space is $\Theta(N)$ bits, by giving a
new implementation for an adapted version of the algorithm of Blum et al.~\cite{BFPRT73} that also runs in $\Theta(N)$ time.
For the general case of $\Theta(S)$ bits of workspace, the best known algorithm is
that of Frederickson~\cite{Fre87}. The running time of Frederickson's algorithm 
is $O(N \lg^*(N (\lg N)/S) + N (\lg N) / \lg S)$ when $S = \Omega(\lg^3 N)$. 
We generalize our algorithm to run in $O(N \lg^*(N/S) + N (\lg N) / \lg S)$ time
and use $\Theta(S)$ bits, for any $\lg^3 N \le S \le N$, 
and thus improve Frederickson's algorithm (for a narrow range of $S$).

\begin{table}[t!]
\caption{\label{table:known-algorithms}The best known algorithms for
  selecting the $k$-th smallest of $N$ elements in the space-restricted
  random-access model; $N$ is the number of elements of the read-only input and $k$
  is an arbitrary integer between 1 and $N$. 
  The algorithms (except that in \cite{BFPRT73}) work for a larger range of workspace, 
  but we give their running times only
  for these specific values.}
\begin{center}
\begin{small}
\begin{tabular}{|l|c|c|c}
\cline{1-3}
\multicolumn{1}{|c|}{\textbf{Inventors}\ylatyhjaa} & \textbf{Workspace in bits} &
\textbf{Running time}\\ 
\cline{1-3}
Munro and Raman\ylatyhjaa{} \cite{MR96} & $\Theta(\lg N)\strut$ & 
$O(N^{1 + \varepsilon})$ & \\
Raman and Ramnath \cite{RR99} & $\Theta(\lg^2 N)$ & 
$O(N\lg^2 N)$ & \\
%\cline{4-4}
Frederickson \cite{Fre87} & $\Theta(\lg^3 N)$ &
$O(N \lg N/\lg\lg N)$ & \\
%\cline{4-4}
Frederickson \cite{Fre87} & $\Theta(N)$ &
$O(N \lg^{*} N)$ & \\
%\cline{4-4}
Blum et al.~\cite{BFPRT73} & $\Theta(N\lg N)$ &
$\Theta(N)$ & \\
\cline{1-3}
Elmasry et al.~[this paper] & $\Theta(N)$ & $\Theta(N)$ & \\
\cline{1-3}
\end{tabular}
\end{small}
\end{center}
\vspace{-.2in}
\end{table}

In the literature two main models of
computation have been considered when handling read-only data: the multi-pass streaming model
\cite{Cha10,Fre87,MP80} and the space-restricted
random-access model (that is used in this paper) \cite{Fre87,MR96,RR99}.  
The essential difference is that,
in the streaming model, the read-only input must be accessed
sequentially but multiple scans of the entire input are allowed; in
addition to the running time, the number of scans performed would be
an optimization target.  Chan~\cite{Cha10} proved that
Frederickson's algorithm is asymptotically optimal for the selection
problem in the multi-pass streaming model.  He questioned whether this
lower bound would also hold in the space-restricted random-access
model.  We answer this question negatively, by 
surpassing the bound on the space-restricted random-access machine.

Our algorithms rely heavily on the random-access capabilities.
The kernel of our construction is the wavelet stack---a 
new data structure that allows us to eliminate elements while being able
to sequentially scan the active elements and skip over the eliminated
ones. This data structure only requires a constant number
of bits per element (instead of the usual $\ceils{\lg N}$ bits
required for storing indices). The wavelet stack is by no means
restricted to this particular application. We believe it would
be generally useful for prune-and-search algorithms in the
space-bounded setting. A wavelet stack comprises several layers of bit
vectors, each supporting \Rank{} and \Select{} queries in $O(1)$
worst-case time \cite{Cla96,Jac89,Mun96,RRR07}. Using the \Rank{} and
\Select{} operations, we can navigate between the layers of
the stack and perform successor queries efficiently.

\section{Previous results}

In this section we recall the main ideas of Munro-Paterson and Frederickson selection
algorithms, some of which we shall use later. We also discuss the time-space lower bound
for the selection problem in the multi-pass streaming model.
In addition, we highlight some selection algorithms for strictly limited workspace
and recent developments for integer data.
Throughout this section we assume the available workspace 
to be $\Theta(s_w)$ words, i.e.~$S = s_w \lg N$ bits. 

\subsection{Munro-Paterson selection algorithm}

Munro and Paterson \cite{MP80} outlined a selection algorithm for the multi-pass streaming model that achieves a running time of $O(N \lg s_w + N (\lg N) / \lg s_w)$, when $s_w = \Omega(\lg^2 N)$. It should be noted that this algorithm is originally designed to optimize the number of passes made over the input. 

The main idea of the algorithm is to repeatedly select two filter
elements of improving quality. 
The filters determine which elements are still to be considered. 
Any element falling outside the range of the filters is simply ignored.
After a number of iterations, there are
few enough candidates remaining in the range of the filters so that we
can find the designated element using a standard linear-time selection algorithm \cite{BFPRT73} within
the limits of the available workspace. 

In each pass a {\em sample} is constructed from the elements falling between the filters of the previous pass. 
An $s$-sample at level-$i$ is a sorted sequence of $s$ elements deterministically chosen from a population of $s \cdot 2^i$ candidates. A level-$0$ sample consists of $s = 2\ceils{s_w / (2 \lg N)}$ candidates in sorted order, and is obtained by resuming the sequential scan of the input array. A level-$i$ $s$-sample is a {\em thinning} of two $s$-samples from level-$(i-1)$, and is obtained by selecting every other element from each sample and then merging the two thinned samples. To utilize the storage efficiently, a bottom-up approach is employed to iteratively construct next-level samples once two are ready at a level. 
Thus, when an $s$-sample is constructed, there is at most one $s$-sample at each level other than the one that has just been produced. 
Let $n$ be the number of active elements at the beginning of a pass.
At the end of that pass, two improved filters are selected from the $s$-sample at level $r = \ceils{\lg(n/s)}$. 
The ranks of the two new filters with respect to the sorted sample at level~$r$ are $\lceil k/2^r \rceil -r$ and $\lceil k/2^r \rceil$. 

The total running time of the Munro-Paterson algorithm is $O(N \lg s + N (\lg N) / \lg s)$.
The actual running-time analysis of the algorithm is due to Frederickson \cite{Fre87}, whose arguments can be summarized as follows:

\begin{itemize}
\item Starting with $n$ active elements remaining in the range of two filters, the next pass will reduce the number of active elements to $O((n/s) \cdot \lg (n/s))$. 
\item The number of passes performed by the algorithm is $O((\lg N) / \lg s)$. 
\item Except for $\Theta(N)$ work done per pass to scan and compare the elements, the $O(N \lg s)$ time consumed in sorting the level-$0$ samples during the first pass dominates the rest of the work.  
\end{itemize}

\subsection{Frederickson's improved selection algorithm}
\label{subsec:fred}

As mentioned above, Frederickson \cite{Fre87} observed that the bottleneck of the Munro-Paterson algorithm is the sorting done to create the $s$-samples at level~$0$ during the first pass. Since there are $N/s$ such samples, their sorting cost accounts for the $N \lg s_w$ term in the running time.

Frederickson improved the algorithm by modifying the sampling procedure.
Using a parameter $d$, the algorithm finds the $d$-quantiles of the size-$s$ sets that are gathered at level~$0$ (instead of sorting the sets).
The execution of the algorithm can be divided into $\mathcal{P}$ {\em phases}, where $\mathcal{P}= \lg^* s -2$.  
In each phase, the algorithm performs a constant number of passes until the number of elements is reduced to $N/\lg^{(\mathcal{P})} s$. The value of $\mathcal{P}$ is decremented and a new phase is performed, repeating until $\mathcal{P}=0$.  
After each phase, the parameter is adjusted to $d = \lg^{(\mathcal{P})} s$. 
As $\mathcal{P}$ decreases by one in each phase, $d$ increases exponentially.
Each pass requires $O(N + n \lg d)$ time, where $n$ is the number of active candidates before the pass. 
Initially, only a constant number of quantiles are computed, and as the number of remaining candidates decreases the number of quantiles computed per sample increases exponentially. At the low levels, instead of thinning and merging the $d$-samples, they are simply merged such that at level~$i$ the samples have size $2^i \cdot d$. Once $2^i \cdot d \geq s$, the thinning and merging procedure is again in use; this keeps the sample size bounded by $s$. As before, the new filters are computed from a final level-$r$ sample (as in the Munro-Paterson algorithm). 
In all, the work done during each pass, and hence also during each phase, is $\Theta(N)$. 
The above procedure allows us to reduce the input size to $O(N / \lg s)$ once $d=s$, which happens after $O(\lg^* s)$ phases. 
After that, the Munro-Paterson algorithm is back to action.
Combining this outcome with that of the Munro-Paterson algorithm, the running time of the overall algorithm becomes $O(N \lg^* s_w + N (\lg N) / \lg s_w)$ for $s_w = \Omega(\lg^2 N)$.
The insights of the analysis can be summarized as follows:

\begin{itemize}
\item Starting with $n$ active elements remaining in the range of two filters, performing a phase that starts by finding the $d$-quantiles of size-$s$ sets, the number of active elements is reduced to $O((n/s) \cdot \lg (n/s) + n / d)$. 
\item After $O(\lg^* s)$ phases, there are at most $N / \lg s$ elements remaining.
\item Each phase runs in $\Theta(N)$ time.
\end{itemize}

Chan \cite{Cha10} noted that the running time can be improved to $O(N \lg^* (N / s_w) + N (\lg N) / \lg s_w)$,
by simply switching to the Munro-Paterson algorithm once $d=\min(s, N / s)$ (instead of switching when $d=s$).
The reader is encouraged to verify that this bound is better than the $O(N \lg^* s_w + N (\lg N) / \lg s_w)$ bound.

\subsection{Selection algorithms for more limited workspace}

The selection algorithms presented up to this point require at least $s_w = \Omega(\lg^2 N)$ words of workspace to be available. For $s_w = o(\lg N)$, Munro and Raman~\cite{MR96} developed an algorithm based on recursively finding the median of a block of candidates to filter the elements until the required element is found. Their algorithm runs in $O(2^{s_w} s_w!N^{1 + 1/s_w})$ time. When $s_w = 1/\varepsilon = O(1)$, i.e.~with constant number of words of workspace, the running time becomes $O(N^{1+\varepsilon})$. It is also worth noting that the total number of comparisons made by their algorithm is minimized for $s_w = O(\sqrt{(\lg N) / \lg \lg N})$, which gives a running time of $O(N^{1 + O(\sqrt{(\lg \lg N) / \lg N})})$. Munro and Raman also proved that if the elements in the input are assumed to be in random order, then their algorithm can be modified to have an average-case running time of $O(N \lg((\lg N) / \lg s_w))$.
Chan~\cite{Cha10} showed how this algorithm can be randomized so that
the assumption on the order of elements in the input is not needed.

Raman and Ramnath~\cite{RR99} improved the performance when $s_w$ is $o(\lg^2 N)$ and $\Omega(\lg N)$, by describing an algorithm that finds a pair of approximate medians and uses them to construct a three-way partition of the active elements. 
The running time of this algorithm is $O(N \lg^2 N)$ when $s_w = \Theta(\lg N)$. 
They also presented a generalization of this algorithm for smaller values of $s_w$, by describing how to determine the desired approximate median-pair with less space. The  running time of the modified algorithm is $O(s_w N^{1+1/s_w} \lg N)$, which is an improvement over Munro and Raman's algorithm~\cite{MR96} when $s_w = O(\lg N)$ and $2^{s_w}s_w! > s_w \lg N$ (e.g.~when $s_w \geq c \cdot (\lg \lg N) / \lg \lg \lg N$ for some positive constant $c$). Raman and Ramnath also described how the running time can be reduced further if more space is available. This is done by computing a set of three or more splitters instead of the pair of approximate medians, allowing the candidates to potentially be split into more than three buckets. The running time of this algorithm is $O(N \lg N + N (\lg^2 N) / \lg^2 \mathcal{I})$ when $s_w = O(\mathcal{I}^2 (\lg N) / \lg \mathcal{I})$, where $\mathcal{I} \geq 2$ is an integer parameter. This is worse than Frederickson's algorithm by a factor of $(\lg N) / \lg \lg N$ when $s_w = \Theta(\lg^2 N)$, but unlike Frederickson's algorithm, it can be applied in cases where $s_w$ is $o(\lg^2 N)$ and $\Omega(\lg N)$.

\subsection{A lower bound in the multi-pass streaming model}
Munro and Paterson \cite{MP80}, using adversarial arguments, showed that any comparison-based selection algorithm in the multi-pass streaming model must perform $\Omega((\lg N) / \lg s_w)$ passes.
Chan~\cite{Cha10}, also using adversarial arguments, proved that any deterministic comparison-based selection algorithm in this model must use either $\Omega(p)$ passes 
or $\Omega(N \lg^{(p)}(N/s_w))$ comparisons (for any $p$),
indicating that any such algorithm must take $\Omega(N \lg^*(N/s_w))$ time.
Combining the two results, it follows that any comparison-based
deterministic selection algorithm for the multi-pass streaming model
must spend $\Omega(N \lg^*(N/s_w) + N (\lg N) / \lg s_w)$ time on
either scanning or comparisons. 
Chan~\cite{Cha10} posed as an open problem whether this bound also holds for 
the space-restricted random-access model. We show that this is not the case, 
indicating that the two models are distinct in the context of deterministic selection.

\subsection{Integer selection}

Recent work by Chan, Munro and Raman~\cite{CMR-isaac} has indicated that faster selection algorithms are possible if we restrict the input elements to be a sequence of integers. When the input elements come from the universe $\{1, 2, \dots, U\}$, they presented two algorithms, one running in $O(N \lg_{s_w} U)$ using $O(s_w)$ words of space for any $s_w$ from $1$ to $N$, the other running in $O(N (\lg N) \lg_{s_w} \lg U)$ while using $O(s_w)$ words of space for any $s_w$ from $1$ to $\lg U$. 
The first algorithm determines the bits of the $k$-th element by iteratively counting the number of 1- and 0-bits among the candidates at the current bit location and comparing these counts with $k$. This approach only uses $s_w = O(1)$ words, but can be easily extended to handle $b$ bits during each pass when we have $s_w = 2^b$ extra words of workspace available. The second algorithm is significantly more involved, but the basic idea is to utilize the bit-prefixes of the integers to efficiently select approximate medians that can then be used in a well-known prune-and-search approach for the selection problem. The first algorithm is a significant improvement over existing methods for small universe sizes, whereas the second algorithm is less sensitive to the universe size and thus provides an improvement over existing algorithms for a wider range of universe sizes.

\section{Basic toolbox}

In this section we describe the basic tools used in our algorithms.

\subsection{Bit vectors with \Rank{} and \Select{} support}
A bit vector is an array of bits (0's and 1's).  Consider a bit
vector $V$ that supports the following operations:
\begin{description}
\item[$V.\Access{}(i)$:] Return the bit at index $i$, also denoted
  as $V[i]$. 
\item[$V.\Rank(i)$:] Return the number of 1-bits among the bits $V[1],
  V[2],\ldots, V[i]$.
\item[$V.\Select{}(j)$:] Return the index of the $j$-th 1-bit,
  i.e.~when the return value is $i$, $V[i]=1$ and $\Rank(i) = j$.
\end{description}
On a word RAM with $w$-bit word size, one can store a sequence of $N$
bits using $\ceils{N/w}$ words, such that any substring
of at most $w$ bits---not only a single bit---can be accessed in $O(1)$
worst-case time.

There exist several space-efficient solutions to support the above
operations in $O(1)$ worst-case time. Jacobson \cite{Jac89} showed how
to support \Rank{} and \Select{} in $O(\lg N)$ bit probes using
$o(N)$ bits in addition to the bit vector. Clark and
Munro~\cite{Cla96,Mun96} showed how to support the queries in $O(1)$
worst-case time using $O(N/\lg\lg\lg N)$ bits of extra workspace on a RAM with word 
size $\Theta(\lg N)$ bits. Raman et al.~\cite{RRR07} improved 
the space bound to $O(N (\lg \lg N)/ \lg N)$ bits, which was shown 
by Golynski~\cite{Gol07} to be
optimal provided that the bit vector is stored in plain form (using 
$N$ bits or $\ceils{N/w}$ words). The
basic idea of these solutions is to divide the input into
blocks, store the \Rank{} and \Select{} values for some specific
positions, and compute the \Rank{} and \Select{} values for the
remaining positions on the fly using: the stored values, values in some
precomputed tables, and bits in the bit vector under consideration.

Note that the requirements on the bit vectors for our use in this paper are that 
(i) the space usage must be
$O(N)$ bits, (ii) the operations must have $O(1)$ worst-case cost, 
and (iii) the construction of the supporting structures must
take $O(N)$ worst-case time. For these requirements, Chazelle \cite{Cha88}
described a simple solution to support the \Rank{} operation.  After
breaking the bit vector into words, for the first bit of each word a
\emph{landmark} is computed that is the number of 1-bits preceding
this position. Let the words be $B_1, B_2,\ldots,B_{\ceils{N/w}}$
and the landmarks be $L_1, L_2,\ldots,L_{\ceils{N/w}}$. To compute
$\Rank(i)$, we locate the corresponding word $B_j$ where $j =
\ceils{i/w}$, and calculate the offset $f$ within this word as $f=i - w
\cdot \floors{i/w}$. Then we mask the bits up to index $f$ in $B_j$
and calculate the number of 1-bits in the masked part; let this
number be $q$.  As the end result, we output $L_j + q$ as $\Rank(i)$.
The only remaining part is how to calculate the number of 1-bits in a
word, but this can be done in $O(1)$ time using precomputed tables 
of size $o(N)$ bits. (In practice, one can use the population-count
function that is a hardware primitive in most modern processors.)

Let $Q$ be the number of ones in the bit vector $V$.
To support the \Select{} operations, we construct an array of length $\floors{Q/\lg N}$,
using $O(N)$ bits, whose $j$-th entry stores the position of the $(j \ceils{\lg N})$-th one in $V$,
for $1 \le j \le \floors{Q/\lg N}$. If the difference between 
two consecutive entries in this array is at least $\lg^2 N$, then we store the 
positions of all the $\ceils{\lg N}$ ones in between the two positions using $O(\lg^2 N)$
bits. If the difference between two consecutive entries is less than $\lg^2 N$, we 
construct a complete tree with branching factor $\sqrt{\lg N}$ and 
constant height that stores the bit sequence between the two positions
at its leaves, such that \Select{} queries in this range can be 
answered in constant time. See~\cite{RRR07} for the details of such a structure.

\subsection{Wavelet stacks}
 
In a prune-and-search algorithm, where some of the answer candidates are 
repeatedly eliminated, the set of active elements can
be compactly represented using a bit vector.  The
history of the decisions made by such an algorithm can be conveniently
described using a stack of bit vectors. We call this kind of data
structure a wavelet stack because of its resemblance to a wavelet
tree~\cite{GGV03,Nav12}.
% (for a survey, see \cite{Nav12}). 
In the rest of this section,
we describe the wavelet-stack data structure in detail. In the next section, we
show how it can be used to solve the selection problem.  We believe that
this data structure will be useful in other applications as well.

Let $\sequence{x_1, x_2, \ldots, x_N}$ be a sequence of $N$ elements given in
a read-only array. Assume we want to find a specific subset of
these elements using prune-and-search elimination.  A prune-and-search
algorithm is a recursive procedure that may call itself several
times. Hence, we need a recursion stack to keep track of the
subproblems being solved. In addition to a recursion stack (with
constant-size activation records), we maintain a stack of bit vectors
to mark the active elements in the current configuration. The $i$-th bit in the bit vector 
at a given level corresponds to the $i$-th active element (in the left-to-right
order) at the level below, and a 1-bit in a bit vector at a level
indicates that the corresponding element is still active up to the current level.
Using $\Rank$ and $\Select$ operations on the bit vectors, we can  
scan the active elements at any level (and avoid scanning the 
pruned elements) faster than a left-to-right scan of the input array.

In an abstract form, the wavelet stack is a stack of bit vectors, $H$, 
that can efficiently answer two types of queries:
\begin{description}
\item[$H.\Active{}(i)$:] Return whether the element $x_i$ is
  active at the current configuration.
\item[$H.\Index{}(j)$:] Return the index of the $j$-th active element,
  i.e.~the index of the element corresponding to the $j$-th 1-bit of
  the top-most bit vector.
\end{description}

To fully understand these operations, we have to consider a concrete
implementation of a wavelet stack (see Fig.~\ref{fig:stack}). A wavelet stack is a
hierarchy of bit vectors. The bottom-most level stores one bit per
element, since at the beginning all elements are potential answers (i.e.~active). 
If we have $y$ 1-bits at level~$\ell$, the bit vector at level $\ell + 1$ is
of size $y$. Therefore, the bit vectors become smaller and smaller as
we eliminate more elements from further consideration.

\begin{figure}[tb!]
\begin{center}
\input{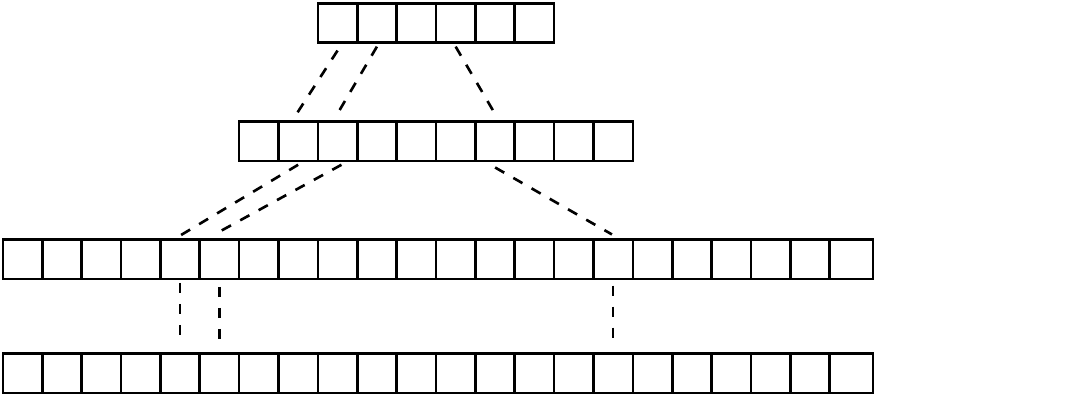_t}
\end{center}
\caption{A wavelet stack for an array of 22 elements. Only elements
  $x_5$, $x_6$, and $x_{16}$ are active at the top-most level.\label{fig:stack}}
\end{figure}

The two operations have a nice symmetry: the operation \Active{} can be
supported by traversing up from the bottom to the top of the stack, and 
the operation \Index{} can be supported by traversing down from
the top to the bottom of the stack. To implement \Active{}($i$), we
compute $\Rank(i)$ at the bottom-most level, which gives us
the index to access in the bit vector immediately above. Continuing upwards and
relying on \Rank{}, we either reach a level where the bit corresponding to the index
value is $0$, indicating that the element $x_i$ is not active any more, or 
reach the top-most level where the bit value is $1$, indicating that $x_i$ is still active. 
To implement \Index{}($j$), i.e.~to return the $j$-th active element at the top-most
level, we start from the top-most level and compute \Select{}$(j)$. 
Then we use the returned index at the level immediately below. This way,
we can proceed down using \Select{} until we reach the bottom-most level. 
%Using the last returned index, we can access the desired element whenever needed. 
The index returned at this level is the index of the $j$-th active element in the input array.

We can summarize the performance of the data structure as follows:

\begin{theorem}
\label{wavelet-stack}
Assume that we have built a wavelet stack of height $h$ for a
read-only array of $N$ elements. Furthermore, assume that at each
level we have succeeded in eliminating a constant fraction of the
elements.
\begin{enumerate}
\item The data structure requires $\Theta(N)$ bits in total.
\item The total time used in the construction of the data structure is
  $\Theta(N)$. 
\item Both \Active{} and \Index{} operations
  take $O(h)$ worst-case time.
\end{enumerate}
\end{theorem}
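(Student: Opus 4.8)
The plan is to reduce all three claims to two facts: the constant-fraction elimination hypothesis forces the level sizes to decay geometrically, and the bit-vector primitives from the toolbox supply $O(1)$-time \Rank{}, \Select{}, and \Access{} together with linear space and linear construction time. I would first fix notation: write $N_\ell$ for the number of bits (equivalently, the number of elements represented) at level $\ell$, with $N_0 = N$, and let $\rho \in (0,1)$ be the constant such that at most a $\rho$-fraction of the active elements survives from one level to the next. Since the number of $1$-bits at level $\ell$ equals the size of the bit vector at level $\ell+1$, we have $N_{\ell+1} \le \rho\,N_\ell$, hence $N_\ell \le \rho^\ell N$ and $\sum_{\ell \ge 0} N_\ell \le N/(1-\rho) = \Theta(N)$. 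This geometric bound is the workhorse for parts~1 and~2, and it is exactly where the constant-fraction hypothesis is essential: without it the sum could grow to $\Theta(hN)$.

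For part~1 I would invoke requirement~(i) on the bit-vector representation, namely that a bit vector of $m$ bits together with its \Rank{}/\Select{} support occupies $O(m)$ bits. Summing over all levels and using the geometric bound gives $O(N)$ bits, while the population-count tables are shared across levels and contribute only $o(N)$ bits in total. The matching lower bound $\Omega(N)$ comes from the bottom-most level alone, which stores one bit per input element. For part~2 I would argue identically using requirement~(iii): building the support structure for a bit vector of $m$ bits costs $O(m)$ time, and the bits of level $\ell+1$ can be filled by enumerating the $1$-bits of level $\ell$ through \Select{} in time $O(N_{\ell+1})$; summing over levels again yields $O(N)$, while merely initializing the bottom level already costs $\Omega(N)$.

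Part~3 is the most direct. For \Active{}($i$) I would follow the upward traversal already sketched: at each level we perform one \Rank{} to translate the current index into the corresponding index one level up and one \Access{} to test whether that bit is set, stopping as soon as we meet a $0$-bit or reach the top. Each level costs $O(1)$ and there are at most $h$ levels, so the total is $O(h)$. Symmetrically, \Index{}($j$) performs one \Select{} per level on the downward traversal, again $O(1)$ per level over at most $h$ levels, for $O(h)$ in total.

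The two traversals in part~3 are immediate once the $O(1)$-time primitives are in hand, so the only point genuinely requiring care is charging the auxiliary structures of each level to that level's own size rather than to $N$. I must confirm that the landmark and \Select{} indexes of a level of size $m$ really occupy $O(m)$ bits and take $O(m)$ time to build, and that the fixed per-level overhead (a minimal landmark and a few pointers, each $O(\lg N)$ bits) together with the shared tables does not spoil the bound. Since the constant-fraction hypothesis caps the height at $h = O(\lg N)$, these overheads total $O(\lg^2 N) = o(N)$ and are absorbed; the geometric sum then dominates, and both the space and the construction time come out to $\Theta(N)$.
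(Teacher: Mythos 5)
Your proof is correct and follows essentially the same route as the paper's: a geometric series over the level sizes for both the space and construction-time bounds, plus $O(1)$-time \Rank{}/\Select{} per level times $h$ levels for the two operations. Your explicit handling of the per-level overheads (shared tables, pointers, $h = O(\lg N)$) mirrors the paper's remark that the header costs only $O(h \lg N) = O(\lg^2 N)$ bits, so nothing is missing.
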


\begin{proof}
Since the number of bits needed at each level is only a constant
fraction of that needed at the level below, for some constant $c < 1$,
the total number of bits of all the bit vectors is bounded by
$\sum_{i=1}^{h} c^{i-1} N = \Theta(N)$ bits. The supporting structures 
for \Rank{} and \Select{} also sum to $O(N)$ bits. 
Since the length of the bit vectors is not known beforehand and
since their sizes may vary, we can allocate a header storing
references to a single bit vector that contains the bits stored at all
levels together. This header will only require $O(h \lg N) = O(\lg ^2 N)$ bits.

The construction of a bit vector, including the supporting structures, can be done in time linear in the vector size. 
The construction time of the wavelet stack can also be expressed as a geometric series,
and is thus $\Theta(N)$. Since the structure has
$h$ levels, and the \Rank{} and \Select{} operations take $O(1)$
worst-case time at each level, it can support \Active{} and \Index{} operations in $O(h)$ time.
\qed
\end{proof}

\section{Selection with $\Theta(N)$ bits}

In this section we show how to utilize
the prune-and-search algorithm of Blum et al.~\cite{BFPRT73}
(also described in~\cite[Section 9.3]{CLRS09}) such that it only
requires $\Theta(N)$ bits of space---instead of $\Theta(N)$ words---but
still runs in $\Theta(N)$ time.

The main idea of the algorithm is to select an element from the
set of active elements, and use it to make the set of candidates
smaller (by a constant factor). This is done repeatedly until the 
required element is found. In the variant considered here we use a 
wavelet stack to keep track of the decisions made by the algorithm. 
The $k$-th smallest among $n$ active elements is found as follows.
\begin{enumerate}
\item A new bit vector $V$ is pushed onto the top of the
  wavelet stack. The size of this bit vector equals the number of
  the currently active elements $n$.
\item \label{groups}Divide the sequence of $n$ elements into groups of size $\ceils{N/\lg N}$ in the same order 
  as in the input array (the last group may have less elements). 
  Find the median of each of the, at most, $\ceils{\lg N}$ groups, 
  by holding the indices of the elements of each group in the available workspace and 
  applying a standard linear-time selection algorithm \cite{BFPRT73}.
\item\label{median-of-medians} Store the indices of the found medians in the available workspace. Find the median $x_{\mu}$ of
  the medians of the groups, applying a standard linear-time selection algorithm \cite{BFPRT73}. 
\item\label{filter-case} Scan through the active elements and
  count the number $\sigma$ of those smaller than $x_{\mu}$.
  If $k = \sigma + 1$, stop and return $\mu$ as an answer. 
  If $k \leq \sigma$, mark the elements smaller than $x_{\mu}$ as the only active elements in $V$ 
  and recursively compute the $k$-th smallest of
  these elements.  Otherwise, if $k > \sigma + 1$, mark the elements
  larger than $x_{\mu}$ as the only active elements in $V$ and set $k$ to $k - \sigma - 1$
  before the recursive call.
\item When $n$ is at most $\ceils{N/\lg N}$, copy the indices of the active 
  elements into the available workspace (releasing the space used by the 
  wavelet stack), and find the $k$-th smallest element
  using a standard linear-time selection algorithm \cite{BFPRT73}.
\end{enumerate}

The analysis of this algorithm is
almost identical to that of the original algorithm of 
Blum et al.~\cite{BFPRT73}. The key point is that, even
though the input is in a read-only array, we do not waste time in
browsing through the elements that have already been eliminated, as we rely
on the \Rank{}-\Select{} operations supported by the bit vectors to 
scan through the active elements efficiently. 
The only overhead is that when we want to access an element we have to
traverse down the wavelet stack.

The performance of the algorithm is summarized in the following
theorem:

\begin{theorem}
\label{thm:linearbits}
The $k$-th smallest of $N$ elements in a read-only array can be found in
$\Theta(N)$ time using $\Theta(N)$ extra bits in the worst case.
\end{theorem}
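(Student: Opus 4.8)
The plan is to treat the procedure as the classical median-of-medians algorithm of Blum et al.\ instrumented with a wavelet stack, and to establish three things: correctness, an $\Theta(N)$-bit space bound, and an $\Theta(N)$ time bound. The only real work beyond invoking Theorem~\ref{wavelet-stack} is showing that the per-access overhead charged by the stack sums to $O(N)$ rather than to $O(N\lg\lg N)$.

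For correctness I would first verify that the stack faithfully tracks the active set: after step~\ref{filter-case} marks the survivors in the freshly pushed vector $V$, the $j$-th active element in input order is exactly the one reported by \Index$(j)$ on the resulting stack, so the next iteration enumerates precisely the surviving elements. Granting this, steps~\ref{groups}--\ref{filter-case} realize a single pivot-and-partition step: they choose an active pivot $x_\mu$ (a median of group medians), compute its rank $\sigma+1$ among the active elements by the scan of step~\ref{filter-case}, and make one recursive call on the correct side with $k$ decremented by $\sigma+1$ when $k>\sigma+1$. Correctness then follows by induction on $n$, with the last step handing the residual subproblem to an in-workspace linear-time selection; note that correctness needs only that $x_\mu$ is an active element whose rank is computed exactly, not any property of how $x_\mu$ was chosen.

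The quantitative heart is a constant-fraction elimination claim. With group size $s=\ceils{N/\lg N}$ and $g=\ceils{n/s}$ groups, the median-of-medians guarantees that at least $\ceils{g/2}\cdot\ceils{s/2}\ge n/4$ active elements fall on each side of $x_\mu$, so every pruning step discards at least a $1/4$ fraction. Since $g=1$ coincides exactly with the base-case threshold $n\le\ceils{N/\lg N}$, we always have $g\ge 2$ whenever we prune, and the bound applies. This supplies the hypothesis of Theorem~\ref{wavelet-stack}, so the stack occupies $\Theta(N)$ bits; the remaining space is a single group's indices, $O((N/\lg N)\lg N)=O(N)$ bits reused across groups, plus $O(\lg^2 N)$ bits for the $\le\ceils{\lg N}$ group-median indices and the constant-size recursion records. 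Altogether the workspace is $\Theta(N)$ bits.

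The time bound is where I expect the one subtlety. Let $n_d$ denote the number of active elements at recursion depth $d$; constant-fraction elimination gives $n_d\le(3/4)^d N$, and the recursion halts once $n_d\le\ceils{N/\lg N}$, so the depth never exceeds $D=O(\lg\lg N)$. At depth $d$ the stack has height $O(d)$, so accessing one active element---a downward \Index traversal---costs $O(d)$; since each level performs $O(n_d)$ such accesses (forming the groups, running linear-time selection on each $O(N/\lg N)$-size group, and the counting scan) and builds the \Rank/\Select support of the pushed vector in $O(n_d)$ time, the work at depth $d$ is $O(n_d\cdot d)$. The naive reading $O(D\sum_d n_d)=O(N\lg\lg N)$ is too weak; the point is that the depth-weighted series converges, \[\sum_{d\ge 0} n_d\cdot d \;\le\; N\sum_{d\ge 0} d\,(3/4)^d \;=\; O(N),\] so the geometric shrinkage of $n_d$ absorbs the $O(d)$ overhead. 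The base-case copy costs $O((N/\lg N)\lg\lg N)=o(N)$, and building the initial size-$N$ vector together with all supporting structures is $O(N)$, giving total time $O(N)$; the matching $\Omega(N)$ bound is immediate. The main obstacle is thus isolating and proving this convergent-series estimate, which rests squarely on the constant-fraction elimination claim above.
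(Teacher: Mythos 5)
Your plan follows the paper's proof essentially step for step: the same median-of-medians procedure instrumented with the wavelet stack, the same invocation of Theorem~\ref{wavelet-stack}, and the same depth-weighted series $\sum_{d} n_d \cdot d \le N \sum_{d} d\,(3/4)^d = O(N)$ for the scanning cost. The genuine gap is exactly at the step you call the quantitative heart. The inequality ``at least $\ceils{g/2}\cdot\ceils{s/2} \ge n/4$ active elements fall on each side of $x_{\mu}$'' silently assumes that every contributing group has $s$ elements, but the last group may be partial, and your observation that $g \ge 2$ whenever we prune does not rescue the bound---$g=2$ is precisely the bad case. Concretely, take $n = s+1$, so the second group is a single element $x$; if $x$ is smaller than everything in the first group, then $x_{\mu} = \min(m_1, x) = x$ and exactly \emph{one} active element lies on the small side, not $n/4$. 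Worse, the failure can repeat: if the first $s$ active elements in input order are all large, the remaining $n-s$ are all small, and $k$ points at the maximum, then each round prunes only $\ceils{(n-s)/2}$ elements, so the excess $n-s$ merely halves. The recursion then runs for $\Theta(\lg N)$ levels with $n_d = \Theta(N/\lg N)$ active elements each, and since one access at depth $d$ costs $\Theta(d)$, the scanning work is $\Theta\bigl(\sum_{d} (N/\lg N)\, d\bigr) = \Theta(N \lg N)$. So the constant-fraction claim is false as stated, and with it both your depth bound $O(\lg\lg N)$ and the convergent series (as well as the hypothesis of Theorem~\ref{wavelet-stack} you invoke for the space bound) collapse---as you yourself note, the whole time bound rests squarely on that claim.

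For comparison, the paper states the pruning bound with an additive slack, ``at least $n/4 - O(\lg N)$,'' and absorbs it into a factor $(3/4+\varepsilon)$; this at least acknowledges that the partial group costs something, although the honest slack is $O(N/\lg N)$ (half a group), so the paper's argument also needs the repair below to be airtight. The clean fix is to stop the recursion earlier: trigger the base case already when $n \le 4\ceils{N/\lg N}$---that many indices of $\ceils{\lg N}$ bits each still fit in $O(N)$ bits of workspace. Then every pruning round has $n > 4s$, hence $g \ge 5$, and the side of $x_{\mu}$ not containing the answer has at least $(\ceils{g/2}-1)\ceils{s/2} \ge gs/4 - s/2 \ge n/4 - s/2 \ge n/8$ active elements, a true constant fraction with no error term. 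With that single amendment, the rest of your argument---correctness, the $\Theta(N)$-bit accounting, and the estimate $\sum_d n_d\, d = O(N)$---goes through exactly as you wrote it.
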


\begin{proof}
At step \ref{groups} of the algorithm, the number of elements of each group is at most $\ceils{N/\lg N}$.
In accordance, the indices of all the elements of a group can be simultaneously stored
using $O(N)$ bits of workspace. A standard linear-time selection algorithm can then be applied on each group
at a time. Similarly, the median of medians can be found in linear time at step \ref{median-of-medians}
of the algorithm within the storage limitations of the available workspace.  

The main observation is that the number of elements pruned at step \ref{filter-case} of the algorithm 
is at least $n/4 - O(\lg N)$, where $n$ is the number of active elements before the pruning.
Hence, the number of the surviving active elements before the next recursive call is at most $3n/4 + O(\lg N)$.
Since $n$ must be larger than $\ceils{N/\lg N}$ to perform a recursive call, thus the 
number of active elements before the $i$-th recursive call is at most $(3/4 + \varepsilon)^{i-1} N$, where $\varepsilon$ is $o(1)$ compared to $N$.
Following Theorem \ref{wavelet-stack}, the total size of all the bit vectors and the accompanying structures of the wavelet stack is $\Theta(N)$ bits and its construction time is $\Theta(N)$. To calculate the time for scanning over the active elements, we note that getting the successor of each active element at the $i$-th recursive call consumes $\Theta(i)$ time. It follows that the total time for scanning over the active elements in all the recursive calls is $\Theta(\sum_{i \geq 1} i \cdot (3/4+\varepsilon)^{i-1} \cdot N) = \Theta(N)$.   
\qed
\end{proof}

\section{Selection with $\Theta(S)$ bits}

In this section we extend our algorithm to handle the more general case of 
using a workspace of $\Theta(S)$ bits, where $\lg^3{N} \leq S \leq N$.
The main idea is to use Frederickson's algorithm \cite{Fre87} to prune the elements, and stop its execution when the number of active elements is at most $S$.  To complete the selection process, we resume pruning using an $O(N)$-time algorithm that we present next.

We use the following lemma, which is based on Frederickson's 
algorithm discussed in Section~\ref{subsec:fred}.
We refer the reader to \cite{Fre87} for the full details.

\begin{lemma}
\label{Fred}
The number of active elements can be reduced from $N$ to $S$ during the
initial phases of Frederickson's algorithm 
in $O(N \lg^* (N / S))$ worst-case time, assuming $S = \Omega(\lg^3 N)$.
\end{lemma}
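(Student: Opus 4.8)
The plan is to run Frederickson's algorithm (Section~\ref{subsec:fred}) essentially verbatim, but to abort it the moment the number of active elements first drops to at most $S$, and then to charge the cost against $\lg^{*}(N/S)$ rather than against $\lg^{*} s_w$. With a workspace of $\Theta(S)$ bits we may store a sample of $s = \Theta(S/\lg N)$ indices, together with the $O(1)$ partially built samples that the bottom-up construction keeps at any instant; since $S = \Omega(\lg^3 N)$ we have $s_w = S/\lg N = \Omega(\lg^2 N)$ words, which is exactly the regime in which Frederickson's algorithm is defined, so every sampling, merging, thinning, and quantile-finding step fits in the available space. I would keep the quantile schedule $d = \lg^{(\mathcal{P})} s$ and the phase structure intact, precisely so that I may quote the two facts established in Section~\ref{subsec:fred}: each phase runs in $\Theta(N)$ time, and after the phase indexed by $\mathcal{P}$ the number of active elements is $O(N/\lg^{(\mathcal{P})} s)$.

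The core of the argument is to bound the stopping time. Because $\mathcal{P}$ decreases by one per phase, the parameter $d = \lg^{(\mathcal{P})} s$ grows as a tower of exponentials (one has $\lg^{(\mathcal{P}-1)} s = 2^{\lg^{(\mathcal{P})} s}$), so the reduction factor $\lg^{(\mathcal{P})} s$ itself grows tower-fashion. Hence the active count first falls to $S$ as soon as $\lg^{(\mathcal{P})} s \ge N/S$, which happens once $\mathcal{P}$ has descended from its initial value $\lg^{*} s - O(1)$ down to roughly $\lg^{*} s - \lg^{*}(N/S)$; subtracting, the number of phases actually executed is $O(\lg^{*}(N/S))$. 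Multiplying by the $\Theta(N)$ cost per phase gives the claimed $O(N \lg^{*}(N/S))$ bound. The conceptual point is that aborting early at $S$ active elements replaces the ``iterated-log distance from $s$ to $1$'' (the $\lg^{*} s$ of the full algorithm) by the ``iterated-log distance from $s$ down to $N/S$'', and the two differ by exactly $\lg^{*}(N/S)$.

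The step I expect to be the main obstacle is making this phase count fully rigorous, i.e.\ verifying that the initial phases genuinely drive the active count down to $S$ within the stated budget rather than stalling above it. Concretely, in the recurrence $n' = O\bigl((n/s)\lg(n/s) + n/d\bigr)$ one must confirm that the $n/d$ term, which produces the tower-style decay, dominates the additive $(n/s)\lg(n/s)$ term throughout every phase before $n$ reaches $S$; this is the one place where the choice $s = \Theta(S/\lg N)$ and the hypothesis $S = \Omega(\lg^3 N)$ have to be fed into an explicit inequality, phase by phase, and where the constant-factor slack hidden in the $O(\cdot)$ must be tracked. Once that inequality is checked, the remaining ingredients---linearity of the per-phase work and sufficiency of $\Theta(S)$ bits for the sampling and quantile computations---follow directly from the analysis of Frederickson's algorithm quoted in Section~\ref{subsec:fred}, so no further machinery is needed.
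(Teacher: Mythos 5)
You take the same route the paper intends: the paper offers no self-contained proof of Lemma~\ref{Fred}---it appeals to Frederickson's analysis as summarized in Section~\ref{subsec:fred} and defers to \cite{Fre87}---and your reconstruction (abort the phases as soon as at most $S$ elements survive, charge $\Theta(N)$ per phase, bound the number of executed phases by $O(\lg^*(N/S))$ via the tower growth of $d$) is exactly that argument made explicit.

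However, the obstacle you single out at the end is not a routine verification; it is a genuine gap, and it cannot be closed for the full range of $S$ claimed in the lemma. The analysis quoted from Section~\ref{subsec:fred} guarantees, per pass, only a reduction from $n$ to $O((n/s)\lg(n/s) + n/d)$; the first term caps the per-pass shrink factor at roughly $s/\lg(n/s)$ no matter how large $d$ grows, so the $n/d$ term can never ``dominate'' past that cap. Consequently, when $S$ is polylogarithmic (say $S = \lg^3 N$, hence $s = \Theta(\lg^2 N)$), driving the count from $N$ down to $S$ requires $\Omega(\lg(N/S)/\lg s)$ passes, i.e., $\Omega(N \lg(N/S)/\lg S)$ time, which overwhelms $O(N\lg^*(N/S))$; no phase-by-phase tuning of constants rescues the inequality you hope to check. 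The inequality does hold once $N/S \le s$, equivalently $S = \Omega(\sqrt{N\lg N})$: then $d$ can grow tower-fashion up to about $N/S$ while staying in range, and since $N/s \le S$ we get $(N/s)\lg(N/s) = O(S\lg S)$, so after $O(\lg^*(N/S))$ phases plus $O(1)$ further passes (each costing $O(N + n\lg s) = O(N)$ at that point) at most $O(S)$ elements remain. This is precisely---and only---the regime in which the paper invokes the lemma, since for $S \le \sqrt{N\lg N}$ it runs Frederickson's algorithm unmodified. So to make your proof (and the lemma) airtight, either restrict the statement to $S = \Omega(\sqrt{N\lg N})$, or weaken the time bound by an additive $O(N\lg(N/S)/\lg S)$ term; either fix is harmless for Theorem~\ref{thm:ram}.
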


We now generalize our selection algorithm from the previous section
to obtain time-space trade-offs. In particular, we describe a selection
algorithm that takes $O(N \lg^* (N / S) + N (\lg N) / \lg S)$ time given
only $\Theta(S)$ bits of workspace, where $\lg^3 N \le S \le N$.

Recall that Frederickson's selection algorithm takes 
$O(N \lg^*(N (\lg N)/S) + N (\lg N) / \lg S)$ time, 
for any $S = \Omega(\lg^3 N)$.
If $S \leq \sqrt{N \lg N}$, we simply use Frederickson's algorithm all
the way, and the resulting running time is as claimed $O(N \lg^*{(N (\lg N)/S}) + N
(\lg N) / \lg S) = O(N \lg^* (N / S) + N (\lg N) / \lg S)$.
From now on we assume that $S > \sqrt{N \lg N}$. We apply a trimmed
execution of Frederickson's algorithm as specified in Lemma~\ref{Fred}.  
The outcome is two filters that guard the, at most, $S$
active elements. Consequently, we are left with the task of selecting the
designated element among those candidates.

Using a wavelet stack and a bit vector supporting
\Rank{} and \Select{} queries, we can finish the pruning in $O(N)$ time.  
We divide the input sequence (consisting of $N$ elements) into $S$ buckets, 
where the $u$-th bucket consists of the elements from the input sequence with indices in the
range $[(u-1) \cdot \ceils{N / S} + 1 \twodots u \cdot \ceils{N  / S}]$, 
for $1 \le u \le S$ (except possibly the last
bucket).  We create the \emph{count vector} $C$---a
static bit vector that indicates the number of active elements originally
contained in each bucket after the execution of Frederickson's
algorithm. The count vector $C$ should also support \Rank{} and
\Select{} queries efficiently.  We store these counts encoded in unary, 
using a 0-bit to mark the border between every two consecutive buckets.
Since a total of at most $S$ candidates need to be stored, the count
vector $C$ contains at most $S$ ones. Since we have exactly $S$
buckets, $C$ contains $S - 1$ zeros. The count vector thus uses $\Theta(S)$ bits.
In addition, we create and maintain a wavelet stack $H$---an element hierarchy where
each bit corresponds to an element among those whose values fall
in the range of the filters. Since there are at most $S$ such elements, the
wavelet stack $H$ uses $O(S)$ bits as well.  While our algorithm is in action,
the wavelet stack is to be updated to indicate the elements that are
currently surviving the pruning phases. 

We can now iterate efficiently through the active elements.
Let $i-1$ be the rank of the element that has just been considered in our iterative scan within the currently active elements.
First, we find the index $j$ of the next element to be considered within the wavelet stack. 
For that we compute 
\[j= H.index(i),\] 
which is the index of the element we are looking for with respect to those falling
between the two filters inherited from Fredrickson's algorithm. 

The position $a$ of this element in the count vector $C$ is
\[a = C.select(j).\]
The difference between the position of a bit in the count vector, $C$, and the \Rank{} of that bit plus one is the index of the bucket that contains the corresponding element. We compute the index $u$ of the bucket containing this element as
\[u = a - j + 1.\]
If $u>1$, we calculate the index $z$ that corresponds to the 0-bit
resembling the border between the $(u-1)$-th and $u$-th buckets in $C$. This is
done by utilizing, $\bar{C}$, the complement vector of $C$ to get
\[z = \bar{C}.select(u-1).\] 

We finally determine the position $g$ of the sought element among
Frederickson's candidates within the $u$-th bucket as

$$g = \left\{
\begin{array}{l}
a \mbox{~~~~~~~if $u=1$} \\
a - z \mbox{~~if $u>1$.}
\end{array}
\right.$$

If the preceding alive element in the scan was also from bucket $u$, we continue scanning the elements of the $u$-th bucket from where we stopped.
Otherwise, we jump to the beginning of the $u$-th bucket, i.e.~to the element whose index is $(u-1) \cdot \ceils{N / S} + 1$ in the input array.
We sequentially scan the elements of this bucket, discard the ones falling outside the filters and count the others, until locating the $g$-th element among them; this is the one we are looking for.

We can now proceed as in the $\Theta(N)$-bit solution. Starting with the elements surviving Frederickson's algorithm, we recursively determine the median-of-medians and use it to perform the pruning. During this process, we keep the wavelet stack up to date as before.
The pruning process continues until only one bucket containing active elements remains, at such point only $O(N / S)$ elements are active. 
Since this branch of the algorithm is employed only when $S = \Omega(\sqrt{N \lg N})$, the indices of the active elements can fit in the allowable workspace, each in $O(\lg N)$ bits, and we continue the selection in linear time.

Since we are operating on buckets, we might have to spend $\Theta(N/S)$ time for scanning per bucket. 
However, we note that initially there is at most $S$ candidates and accordingly at most $S$ buckets. 
Since we prune a constant fraction of the candidates in each iteration, we also reduce the bound on the number of the remaining buckets (those having at least one active element each) by the same constant fraction.
Because we skip the buckets that have no active elements, the work done per pass to iterate over the buckets that have at least one active element can be bounded, as elaborated in the next lemma.

\begin{lemma}
Given a read-only input array with $N$ elements, and two filters, such that at most $S$ elements lie in the range of the filters. 
If $S = \Omega(\sqrt{N \lg N})$, we can solve the selection problem in $O(N)$ time.
\end{lemma}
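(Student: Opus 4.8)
The statement is really an assertion about the running time of the bucket-based procedure described just above it, so the plan is to verify two things: that the workspace never exceeds $\Theta(S)$ bits, and that the total time is $O(N)$. First I would isolate the cost of a single pruning pass. In one pass we iterate once over the currently active candidates to gather the groups and compute the group medians, and once more to count how many active elements are smaller than the median-of-medians $x_{\mu}$ and to mark the survivors in a freshly pushed bit vector. Writing $n_i$ for the number of active elements at the start of pass $i$, $b_i$ for the number of buckets that still contain an active element, and $h_i \le i$ for the height of the wavelet stack at that point, each such scan costs $O(b_i \cdot \ceils{N/S})$ for walking over the non-empty buckets, $O(n_i \cdot h_i)$ for the \Index{}, \Rank{}, and \Select{} navigation through $H$ and $C$, and $O(n_i)$ for the linear-time median computations. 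Hence pass $i$ costs $O(b_i \cdot N/S + n_i h_i + n_i)$.

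Next I would establish the geometric decay of $n_i$. As in the proof of Theorem~\ref{thm:linearbits}, the median-of-medians partition removes a constant fraction of the active elements up to an additive boundary correction; by balancing the grouping (roughly $\sqrt{n_i}$ groups of $\sqrt{n_i}$ elements, whose group indices and whose medians both fit in $\Theta(S)$ bits, since $S \ge \lg^3 N$ forces $\sqrt{S}\,\lg N = O(S)$) this correction is $O(\sqrt{n_i}) = o(n_i)$, so $n_{i+1} \le (3/4 + o(1))\, n_i$. Consequently $n_i = O((3/4 + \varepsilon)^{i-1} S)$, the pruning terminates after $O(\lg S)$ passes, and the wavelet stack never grows beyond height $O(\lg S)$, matching the hypotheses of Theorem~\ref{wavelet-stack}.

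The main obstacle is bounding the bucket-scanning term $\sum_i b_i \cdot N/S$, which is the only place where the per-element cost can be as large as $\Theta(N/S)$. The key observation is that every non-empty bucket contributes at least one distinct active element, so $b_i \le n_i$ throughout. Since the $n_i$ decay geometrically, $\sum_i b_i \le \sum_i n_i = O(S)$, and therefore $\sum_i b_i \cdot N/S = O(N)$: skipping the empty buckets is exactly what keeps this term linear. The remaining terms are handled by the same telescoping used in Theorem~\ref{thm:linearbits}: $\sum_i n_i h_i \le \sum_i i\,(3/4+\varepsilon)^{i-1} S = O(S)$ and $\sum_i n_i = O(S)$, both $O(N)$ because $S \le N$. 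Building the count vector $C$ and the bottom layer of $H$ by a single left-to-right scan of the read-only array costs $O(N)$, and by Theorem~\ref{wavelet-stack} the remainder of the wavelet-stack construction is linear in its $O(S)$ bits.

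Finally I would invoke the hypothesis $S = \Omega(\sqrt{N \lg N})$ to close the base case. The pruning stops once a single non-empty bucket remains, leaving only $O(N/S)$ active elements; their indices occupy $O((N/S)\lg N)$ bits, and $S = \Omega(\sqrt{N \lg N})$ is precisely the condition $N \lg N = O(S^2)$ that makes this $O(S)$, so they fit in the workspace and a standard linear-time selection finishes in $O(N/S) = O(N)$ time. Adding the construction cost, the per-pass costs summed above, and this cleanup yields an overall $O(N)$ running time while never using more than $\Theta(S)$ bits, as claimed.
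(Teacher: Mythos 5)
Your proposal is correct and takes essentially the same route as the paper's proof: the decisive step---that every non-empty bucket contains at least one active element, so $\sum_i b_i \le \sum_i n_i = O(S)$ by geometric decay and hence the total bucket-scanning cost is $O(S)\cdot O(N/S) = O(N)$---is exactly the paper's argument, as is invoking $S = \Omega(\sqrt{N \lg N})$ so that the final $O(N/S)$ candidate indices fit in the workspace for a standard linear-time finish. Your extra bookkeeping (the $n_i h_i$ wavelet-stack navigation term, the construction cost of $C$ and $H$, and the group-size choice) only makes explicit what the paper leaves implicit.
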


\begin{proof}
In each pruning iteration we spend time proportional to the number of buckets remaining, while scanning the elements in these buckets and comparing them with the filters. The number of active elements before we apply the $i$-th pruning iteration of the median-of-medians algorithm is at most $S/c^{i-1}$, for some constant $c > 1$. Obviously, the number of buckets that have active elements cannot exceed the number of elements. It follows that, throughout all the passes of the algorithm, the number of scanned buckets is at most $O(\sum_{i\geq 1} S/c^{i-1}) = O(S)$. Accordingly, the overall work done in scanning these buckets is $O(N)$. Once we have $O(N/S)$ elements remaining, as $S = \Omega(\sqrt{N \lg N})$, we can continue the selection process in the available workspace in $O(N/S)$ time.
\qed
\end{proof}

The main result of this paper is summarized in the upcoming theorem.
 
\begin{theorem}
\label{thm:ram}
Given a read-only array of $N$ elements and a workspace of $\Theta(S)$ bits
where $\lg^3 N \leq S \leq N$, it is possible to solve the
selection problem in $O(N \lg^{*} (N /S) + N (\lg N) / \lg S)$
worst-case time in the space-restricted random-access model.
\end{theorem}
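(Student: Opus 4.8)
The plan is to prove Theorem~\ref{thm:ram} by a two-case analysis governed by how $S$ compares with the threshold $\sqrt{N \lg N}$, assembling the two lemmas established above. In the first regime, $S \le \sqrt{N \lg N}$, I would run Frederickson's algorithm unmodified all the way; since $S = \Omega(\lg^3 N)$ it terminates in $O(N \lg^*(N (\lg N)/S) + N (\lg N)/\lg S)$ time. The only thing to verify is that the first term already matches the target form $O(N \lg^*(N/S))$. This holds because $S \le \sqrt{N \lg N}$ forces $N/S \ge \sqrt{N/\lg N}$, i.e.~$N/S$ is polynomially large in $N$; consequently $(N/S)\lg N \le (N/S)^2$ for $N$ large, and since $\lg^*(y^2) = \lg^*(y) + O(1)$ we obtain $\lg^*(N (\lg N)/S) = \lg^*(N/S) + O(1) = O(\lg^*(N/S))$. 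Thus the claimed bound holds verbatim in this regime.

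In the second regime, $S > \sqrt{N \lg N}$, I would use the two-stage scheme developed in this section. First I apply the trimmed execution of Frederickson's algorithm from Lemma~\ref{Fred} to reduce the number of active elements from $N$ down to at most $S$, which by that lemma costs $O(N \lg^*(N/S))$ time and leaves two filters enclosing at most $S$ candidates. Then, because $S > \sqrt{N \lg N}$ gives $S = \Omega(\sqrt{N \lg N})$, I invoke the preceding bucket lemma (using the count vector $C$ and the wavelet stack $H$ over the $S$ buckets) to finish the selection among the surviving $\le S$ elements in $O(N)$ additional time. The total running time is therefore $O(N \lg^*(N/S) + N)$.

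It then remains only to fold the linear term into the stated bound and to confirm the space budget. Since $S \le N$ implies $\lg S \le \lg N$, we have $N \le N (\lg N)/\lg S$, so the $O(N)$ term from the second regime is dominated by the $N (\lg N)/\lg S$ term already present, and both regimes yield $O(N \lg^*(N/S) + N (\lg N)/\lg S)$. For the space, Frederickson's algorithm, the count vector $C$, and the wavelet stack $H$ each occupy $\Theta(S)$ bits, so the workspace stays within $\Theta(S)$ bits throughout. I expect the one genuinely delicate point to be the iterated-logarithm manipulation in the first regime: one must argue that inflating the argument of $\lg^*$ by a factor of $\lg N$ costs only an additive constant, and this is precisely where the threshold $\sqrt{N \lg N}$ is engineered to help, since it guarantees that $N/S$ is already polynomially large. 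Everything else is a direct concatenation of Lemma~\ref{Fred} and the bucket lemma together with a trivial domination, so the substantive content of the theorem resides in those two lemmas rather than in its own proof.
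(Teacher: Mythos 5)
Your proposal is correct and follows essentially the same route as the paper: the same case split at the threshold $\sqrt{N \lg N}$, with Frederickson's algorithm run unmodified in the first regime, and Lemma~\ref{Fred} followed by the bucket/count-vector lemma in the second, finished by the observation that $N \leq N (\lg N)/\lg S$. In fact your treatment of the first regime is slightly more careful than the paper's, which asserts the identity $O(N \lg^*(N(\lg N)/S)) = O(N \lg^*(N/S))$ without spelling out the argument that $N/S$ is polynomially large so that inflating the argument of $\lg^*$ by a $\lg N$ factor costs only $O(1)$.
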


Theorem~\ref{thm:ram} implies that, in the read-only space-restricted
setting, Chan's lower bound \cite{Cha10} for selection in
the multi-pass streaming model does not apply to the
random-access model.

\section{Conclusions}

We showed that, given an array of $N$ elements in a read-only memory,
the $k$-th smallest element can be found in $\Theta(N)$ worst-case time
using $\Theta(N)$ bits of extra space.  We also generalized our
algorithm to run in $O(N \lg^{*}(N/S) + N (\lg N) / \lg S)$ time using
workspace of $\Theta(S)$ bits, $\lg^3 N \leq S \leq N$.  Our main purpose
was to show that the lower bound proved by Chan \cite{Cha10} for the
multi-pass streaming model can be surpassed in the space-restricted
random-access model.

In the read-only setting, the selection problem has been studied since
1980 \cite{MP80}. In contrast to sorting, the exact complexity of
selection is still open. The time-space trade-off for sorting is known
to be $\Theta(N^2/S + N\lg S)$ \cite{Bea91,PR98}, where $S$ is the
size of the workspace in bits, $\lg N \leq S
\leq N/\lg N$. The optimal bound for sorting can even be realized
using a natural priority-queue-based algorithm~\cite{AEK13}.

Subsequent to our work, Chan et al.~\cite{CMR-isaac} considered
the selection problem in the space-restricted random-access model
when the elements of the input array are integers, and gave ``improved" bounds for this 
case. Chan et al.~\cite{CMR-soda} also considered the selection (and 
also sorting) problem in another model, called the {\em restore model},
where the input array is allowed to be modified during the process of 
answering a query, but after the query is answered it has to be restored 
to its original state. They used the result of Theorem~\ref{thm:linearbits}, 
and obtained a linear-time selection algorithm with logarithmic amount of 
extra space. The selection problem in the restore model has also been 
previously considered by Katajainen and Pasanen~\cite{KP94}, who gave 
a linear-time algorithm that uses a linear number of extra bits
for the case when the elements are indivisible (i.e.~they can only be compared). 
Settling the exact complexity of the selection problem in different computational 
models is still an interesting, partially open problem.

\section*{References}
%\bibliographystyle{abbrv}
%\bibliography{longstrings,tcs-final}

\end{document}